\author{
    Mathias Hall-Andersen
    \and
    Nikolaj I. Schwartzbach
}
\institute{
    Department of Computer Science, Aarhus University
}
\title{Game theory on the blockchain: a model for games with smart contracts}
\newcommand{\inducible}[0]{\mathscr{R}}
\newcommand{\coloring}{\textsc{3-coloring}}
\DeclareMathOperator*{\argmax}{argmax}
\renewcommand{\P}{\mathsf{P}}
\newcommand{\Pcolor}{P_\text{color}}
\newcommand{\Pcheck}{P_\text{check}}
\newcommand{\player}[0]{P}
\newcommand{\playern}[1]{\player_{#1}}
\newcommand{\gamepath}[0]{\mathfrak{p}} 
\newcommand{\subgame}[0]{G_s} 
\newcommand{\utility}[0]{u}
\newcommand{\gametree}[0]{T}
\newcommand{\cut}[0]{c}
\newcommand{\cutof}[1]{\ensuremath{c^{(#1)}}}
\newcommand{\playerL}[0]{\playerA}
\newcommand{\playerF}[0]{\playerB}
\newcommand{\playerA}[0]{\playern{1}}
\newcommand{\playerB}[0]{\playern{2}}
\newcommand{\EXP}{\textsf{EXP}}
\def\btc{}
\newcommand{\PSPACE}{\textsf{PSPACE}}
\renewcommand{\P}{\textsf{P}}
\newcommand{\NAND}{\textsf{NAND}}
\newcommand{\NP}{\textsf{NP}}
\begin{document}

\renewcommand{\floatpagefraction}{.8}%
\SetAlgoLined
\SetNoFillComment
\maketitle

\begin{abstract}
    We propose a model for games in which the players have shared access to a blockchain that allows them to deploy smart contracts to act on their behalf.
    This changes fundamental game-theoretic assumptions about rationality since a contract can commit a player to act irrationally in specific subgames,
    making credible otherwise non-credible threats. This is further complicated by considering the interaction between multiple contracts which can reason about each other.
    This changes the nature of the game in a nontrivial way as choosing which contract to play can itself be considered a move in the game.
    Our model generalizes known notions of equilibria,
    with a single contract being equivalent to a Stackelberg equilibrium, and two contracts being equivalent to a reverse Stackelberg equilibrium.
    We prove a number of bounds on the complexity of computing SPE in such games with smart contracts.
    We show that computing an SPE is $\PSPACE$-hard in the general case.
    Specifically, in games with $k$ contracts, we show that computing an SPE is $\Sigma_k^\P$-hard for games of imperfect information.
    We show that computing an SPE remains $\PSPACE$-hard in games of perfect information if we allow for an unbounded number of contracts.
    We give an algorithm for computing an SPE in two-contract games of perfect information that runs in time $O(m\ell)$ where $m$ is the size of the game tree and $\ell$ is the number of terminal nodes.
    Finally, we conjecture the problem to be $\NP$-complete for three contracts.
\end{abstract}

\section{Introduction}

This paper is motivated by the games that arise on permissionless blockchains such as Ethereum \cite{yellowpaper} that offer ``smart contract'' functionality:
in these permissionless systems, parties can deploy smart contracts without prior authorization by buying the ``tokens'' required to execute the contract. By smart contracts, we mean arbitrary pieces of code written in a Turing-complete language\footnote{However the running time of the contracts is limited by the execution environment.} capable of maintaining state (including funds) and interact with other smart contracts by invoking methods on them. Essentially, smart contracts are objects in the Java sense.
Parties can also invoke methods on the smart contracts manually.
Note that the state of all smart contracts is public and can be inspected by any party at any time. This changes fundamental game-theoretic assumptions about rationality: in particular, it might be rational for a player to deploy a contract that commits them to act irrationally in certain situations to make credible otherwise non-credible threats. This gives rise to very complex games in which parties can commit to strategies,
that in turn depend upon other players' committed strategies. Reasoning about such equilibria is important when considering games that are meant to be played on a blockchain, since the players - at least in principle - always have the option of deploying such contracts. In the literature, this is known as a Stackelberg equilibrium where a designated leader commits to a strategy before playing the game. In general, because of first-mover advantage, being able to deploy a contract first is never a disadvantage, since a player can choose to deploy the empty contract that commits them to nothing. It is well-known that it is hard to compute the Stackelberg equilibrium in the general case \cite{letchford_phd}, though much less is known about the complexity when there are several of these contracts in play: when there are two contracts, the first contract can depend on the second contract in what is known as a reverse Stackelberg equilibrium \cite{reverse_stackelberg_closed_loop_2,reverse_stackelberg_closed_loop_1,reverse_stackelberg_threat}. This is again strictly advantageous for the leader since they can punish the follower for choosing the wrong strategy. In this paper, we present a model that generalizes (reverse) Stackelberg games, that we believe captures these types of games
and which may be of wider interest.
In practical terms, we believe that our model is of interest when analyzing distributed systems for "game-theoretic security"
in settings where the players naturally have the ability to deploy smart contracts. Potential examples include
proof-of-stake blockchains themselves and financial applications that build upon these systems.

\begin{figure}\begin{adjustbox}{center}
		\begin{tabular}{ c  c  c c | c  c }
			\toprule
			Contracts & Players & ~Information~  & ~Strategies~ & Lower bound & Upper bound \\
			\toprule
			0   & 2     & perfect & pure & \P-hard~\cite{spe_p_complete} & $O(m)$ \cite{osborne} \\\midrule
			0   & 2     & imperfect & mixed & \multicolumn{2}{c}{\textsf{PPAD}-complete~\cite{papadimitriou,chen06}} \\\toprule
			1   & 2     & perfect & pure & \P-hard~\cite{spe_p_complete} & $O(m\ell)$~\cite{kristoffer_spe} \\\midrule
			1   & 2     & perfect & mixed & \multicolumn{2}{c}{\NP-complete~\cite{stackelberg_np_hard}} \\\midrule
			1   & 2     & imperfect & - & \multicolumn{2}{c}{\NP-complete~\cite{stackelberg_np_hard}} \\\toprule
			2   & 2     & perfect & pure & \P-hard~\cite{spe_p_complete} & $O(m\ell)$ [\cref{thm:two_contracts_feasibility}] \\\toprule
			3   & 3     & perfect & pure & Conjectured \NP-hard & \NP~  [\cref{thm:two_contracts_feasibility}] \\\toprule
			$k$ & $2+k$ & imperfect & pure & $\Sigma^{p}_k$-hard [\cref{thm:poly_hierarchy-hard}] & ? \\\toprule
			unbounded & - & perfect & pure & ~~$\PSPACE$-hard [\cref{thm:perfect_information_pspace_hard}] & ? \\
			\bottomrule
		\end{tabular}
	\end{adjustbox}
	\caption{An overview of some existing bounds on the complexity of computing an SPE in extensive-form games and where our results fit in. Here, $m$ is the size of the tree, and $\ell$ is the number of terminal nodes.}
\end{figure}

\subsubsection{Our results}
We propose a game-theoretic model for games in which players have shared access to a blockchain that allows the players to deploy smart contracts to act on their behalf in the games. Allowing a player to deploy a smart contract corresponds to that player making a `cut' in the tree, inducing a new expanded game of exponential size containing as subgames all possible cuts in the game. We show that many settings from the literature on Stackelberg games can be recovered as special cases of our model, with one contract being equivalent to a Stackelberg equilibrium, and two contracts being equivalent to a reverse Stackelberg equilibrium.
We prove bounds on the complexity of computing an SPE in these expanded trees.
We prove a lower bound, showing that computing an SPE in games of imperfect information with $k$ contracts is $\Sigma_k^\P$-hard by reduction from the true quantified Boolean formula problem.
For $k=1$, it is easy to see that a contract can be verified in linear time, establishing \NP-completeness. In general, we conjecture $\Sigma_k^\P$-completeness for games with $k$ contracts, though this turns out to reduce to whether or not contracts can be described in polynomial space.
For games of perfect information with an unbounded number of contracts, we also establish \PSPACE-hardness from a generalization of \coloring. We show an upper bound for $k=2$ and perfect information, namely that computing an SPE in a two-contract game of size $m$ with $\ell$ terminal nodes (and any number of players) can be computed in time $O(m\ell)$. For $k=3$, the problem is clearly in $\NP$ since we can verify a witness using the algorithm for $k=2$, and we conjecture the problem to be \NP-complete. Finally, we discuss various extensions to the model proposed and leave a number of open questions.

\section{Games with smart contracts}


In this section, we give our model of games with smart contracts.
We mostly assume familiarity with game theory and refer to \cite{osborne} for more details.
For simplicity of exposition, we only consider a somewhat restricted class of games, namely finite games in extensive form, and consider only pure strategies in these games.
In addition, we will assume games are in \emph{generic form}, meaning the utilities of all players are unique.
This has the effect that the resulting subgame perfect equilibrium is unique.
Equivalently, we use a tie breaking algorithm to decide among the different subgame perfect equilibria, and slightly perturb the utilities of the players to match the subgame perfect equilibrium chosen by the tie breaker.

Formally, an \emph{extensive-form game $G$} is a finite tree $T$.  We denote by $L \subseteq T$ the set of leaves in $T$, i.e. nodes with no children, and let $m$ denote the number of nodes in $T$. Each leaf $\ell$ is labeled by a vector $u(\ell) \in \mathbb{R}^n$ that denotes the utility $u_i(\ell)$ obtained by party $P_i$ when terminating in the leaf $\ell$. In addition, the game consists of a finite set of $n$ players. We consider a fixed partition of the non-leaves into $n$ sets, one for each player. The game is played by starting at the root, letting the player who owns that node choose a child to recurse into, this is called a move.
We proceed in this fashion until we reach a leaf and distribute its utility vector to the players.
When there is perfect information, a player always knows exactly which subgame they are playing, though more generally we may consider a partition of the non-leafs into \emph{information sets}, where each player is only told the information set to which their node belongs. When all information sets are singletons we say the game has perfect information.
The players are assumed to be \emph{rational}, that is they choose moves to maximize their utility: we say a strategy for each player (a strategy profile) constitutes a \emph{(Nash) equilibrium} if no unilateral deviation by any party results in higher utility for that party. Knowing the other players are rational, for games of perfect information, at each branch a player can anticipate their utility from each of its moves by recursively determining the moves of the other parties.
This process is called \emph{backward induction}, and the resulting strategy profile is a \emph{subgame perfect equilibrium}. A strategy profile is an SPE if it is an equilibrium for every subgame of the game. For games of perfect information, computing the SPE takes linear time in the size of the tree and can be shown to be $\P$-complete \cite{spe_p_complete}.
Later, we will show a lower bound, namely that adding a contract to the tree moves this computation up (at least) a level in the polynomial hierarchy.
Specifically, we show that computing the SPE in $k$-contract games is $\Sigma_k^\P$-hard in the general case with imperfect information.

\subsection{Smart contract moves}

We now give our definition of smart contracts in the context of finite games.
We add a new type of node to our model of games, a \emph{smart contract move}.
Intuitively, whenever a player has a smart contract move, they can deploy a contract that acts on their behalf for the rest of the game.
The set of all such contracts is countably infinite, but fortunately, we can simplify the problem by considering equivalence classes of contracts which ``do the same thing''.
Essentially, the only information relevant to other players is whether or not a given action is still possible to play:
it is only if the contract dictates that a certain action cannot be played, that we can assume a rational player will not play it.
In particular, any contract which does not restrict the moves of a player is equivalent to the player not having a contract.
Such a restriction is called a \emph{cut}.
A cut $\cutof{i}$ for player $\playern{i}$ is defined to be a
union of subtrees whose roots are childen of $\playern{i}$-nodes, such that:
\begin{enumerate*}[(1)]
    \item every node in $\gametree \setminus \cutof{i}$ has a path going to a leaf; a cut is not allowed to destroy the game by removing all moves for a player, and
    \item $\cutof{i}$ respects information sets, that is it `cuts the same' from each node in the same information set.
\end{enumerate*}

\begin{figure}
    \centering
    \begin{tikzpicture}
    [
    level 1/.style={sibling distance=18mm},
    level 2/.style={sibling distance=14mm},
    level 3/.style={sibling distance=14mm},
    level distance=1cm,align=center]
    \node[draw] {$\playerL$}
    child {
        node[draw,circle] {$\playerF$}
        child {
            node[draw, circle] {$\playerL$}
            child { node {$(-\infty,-\infty)$} }
            child { node {$(0,0)$} } }
        child {node {$(1\btc,\,-1\btc)$} }
    };
    \node (A) at (1.85,-1.5) {=};
    \end{tikzpicture}
    \adjustbox{width=0.6\linewidth}{
        \begin{tikzpicture}
        [level 1/.style={sibling distance=25mm},
        level 2/.style={sibling distance=14mm},
        level distance=1cm,align=center,
        every node/.style={thin},
        emph/.style={edge from parent/.style={ very thick,draw}},
        norm/.style={edge from parent/.style={solid,black,thin,draw}}]
        \node[draw,circle] {$\playerL$}
        child[norm] {
            node[draw,circle] {$\playerF$}
            child[emph] {
                node[draw, circle] {$\playerL$}
                child[norm] { node {$(-\infty,-\infty)$} }
                child { node {$(0,0)$} } }
            child[norm] {node {$(1\btc,\,-1\btc)\quad$} }
        }
        child {
            node[draw,circle] {$\playerF$}
            child[emph] {
                node[draw, circle] {$\playerL$}
                child { node {$(0,0)$} } }
            child[norm] {node {$(1\btc,\,-1\btc)\quad$} }
        }
        child[emph] {
            node[draw,circle] {$\playerF$}
            child[norm] {
                node[draw, circle] {$\playerL$}
                child { node {$(-\infty,-\infty)$} }}
            child[emph] {node {$(1\btc,\,-1\btc)$} }
        };
        \end{tikzpicture}
    }
    \caption{
        Expanding a smart contract node for a simple game.
        The square symbol is a smart contract move for player $\playerL$.
        We compute all $\playerL$-cuts in the game and connect them with a node belonging to $\playerL$.
        The first coordinate is the leader payoff, and the second is the follower payoff.
        The dominating paths are shown in bold.
        We see that the optimal strategy for $\playerL$ is to commit to choosing $(-\infty, -\infty)$ 
        unless $\playerF$ chooses $(1\btc, -1\btc)$. 
    }
    \label{fig:expansion}
\end{figure}
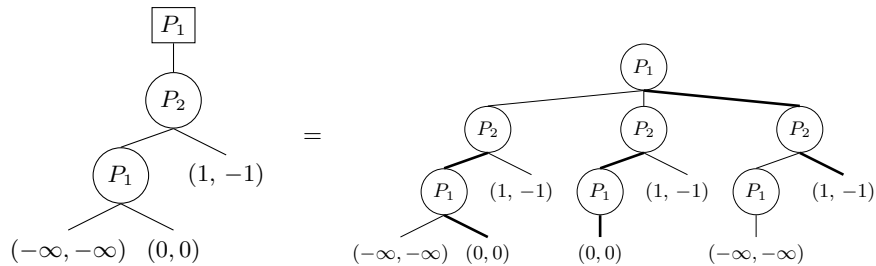

In other words, deploying a smart contract corresponds to choosing a cut in the game tree.
This means that a smart contract node for player $\playern{i}$ in a game $\gametree$ is essentially syntactic sugar for the \emph{expanded tree}
that results by applying the set of all cuts $\cutof{i}$ to $\gametree$ and connecting the resulting games with a new node belonging to $\playern{i}$ at the top.
Computing the corresponding equilibrium with smart contracts then corresponds to the SPE in this expanded tree.
Note that this tree is uniquely determined. See \cref{fig:expansion} for an example.
We use the square symbol in figures to denote smart contract moves. When a game contains multiple smart contract moves, we expand the smart contract nodes recursively in a depth-first manner using the transformation described above.

\subsection{Contracts as Stackelberg equilibria}
As mentioned earlier, the idea to let a party commit to a strategy before playing the game is not a new one: in 1934, von Stackelberg proposed a model for the interaction of two business firms with a designated market leader \cite{stackelberg}.
The market leader holds a dominant position and is therefore allowed to commit to a strategy first, which is revealed to the follower who subsequently decides a strategy. The resulting equilibrium is called a Stackelberg equilibrium.
In this section we show that the Stackelberg equilibrium for a game with leader $\playerL$ and follower $\playerF$ can be recovered as a special case of our model where $\playerL$ has a smart contract.
We use the definition of strong Stackelberg equilibria from \cite{Leitmann1978,Breton1988}.
We note that since the games are assumed to be in generic form,
the follower always has a unique response,
thus making the requirement that the follower break ties in favor of the leader unnecessary.

Let $T$ be a game tree. A \emph{path} $\mathbf{p} \subseteq T$ is a sequence of nodes such that for each $j$, $\mathbf{p}_{j+1}$ is a child of $\mathbf{p}_j$. If $\mathbf{p}$ is a path, we denote by $\mathbf{p}^{(i)} \subseteq \mathbf{p}$ the subset of nodes owned by player $P_i$. Now suppose $T$ has a horizon of $h$. We let $\gamepath=(\gamepath_j)_{j=1}^h \subseteq T$ denote the \emph{dominating path} of the game defined as the path going from the root $\gamepath_1$ to the terminating leaf $\gamepath_h$ in the SPE of the game. 

\begin{definition}
    Let $i \in [n]$ be the index of a player, and let $f(s_i)$ be the best response to $s_i$ for players other than $P_i$. We say $(s_i^*, f(s_{i}^*))$ is a \emph{Stackelberg equilibrium with leader $P_i$} if the following properties hold true:
    \begin{itemize}
        \item \emph{Leader optimality}. For every leader strategy $s_i$, $u_i(s_i^*, f(s_i^*)) \geq u_i(s_i, f(s_i))$.
        \item \emph{Follower best response.} For every $j\neq i$, and every $s_{-i}$, $u_j(s_i^*, f(s_i^*)) \geq u_j(s_i^*, s_{-i})$.$\hfill\diamond$
    \end{itemize}
\end{definition}

\begin{proposition}
The Stackelberg equilibrium with leader $P_i$ is equivalent to $P_i$ having a smart contract move.
\end{proposition}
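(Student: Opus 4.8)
The plan is to show that the SPE of the expanded tree obtained from a smart contract move for $P_i$ at the root coincides, both in outcome and in induced strategies, with the Stackelberg equilibrium $(s_i^*, f(s_i^*))$. Recall that in the expanded tree the root is a $P_i$-node whose children are all cuts $\cutof{i}$, each leading to the subgame obtained by applying that cut to $\gametree$; since $P_i$ is rational, the SPE selects at the root the cut maximizing $P_i$'s continuation utility. I would therefore argue that ranging over all cuts yields exactly the same set of achievable leader utilities as ranging over all pure strategies of $P_i$, which is precisely what the Stackelberg equilibrium optimizes over.

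First I would fix the correspondence between pure strategies and cuts. Call a cut \emph{maximal} if it leaves exactly one child at every reachable $P_i$-node. Such cuts are in bijection with the pure strategies $s_i$ of $P_i$: the surviving child at each node records the move prescribed by $s_i$, and condition (1) in the definition of a cut is automatically met, since following the forced child at $P_i$-nodes and any child elsewhere always reaches a leaf in the finite tree. Committing to the maximal cut $\cutof{i} = c_{s_i}$ locks $P_i$'s play to $s_i$, so in the resulting subgame the other players (the followers) face a fixed leader strategy and their SPE is precisely the best response $f(s_i)$. Hence the leader utility of $c_{s_i}$ is $u_i(s_i, f(s_i))$, and maximizing over maximal cuts reproduces leader optimality.

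The crux is to show that no \emph{partial} cut, i.e. one that leaves $P_i$ a genuine choice at some node, can outperform the best maximal cut. Here I would prove the key lemma: for an arbitrary cut $\cutof{i}$, if $\sigma$ denotes the SPE strategy of $P_i$ in the post-cut subgame, then the maximal cut $c_\sigma$ yields the same SPE outcome. The argument is that $\sigma$, being subgame perfect, already prescribes $P_i$'s move at every node, on and off the equilibrium path; passing from $\cutof{i}$ to $c_\sigma$ merely deletes the $P_i$-children that $\sigma$ never selects. Since the leader plays $\sigma$ in both subgames, the region reachable under $\sigma$ is identical, so by backward induction every other player faces the same continuation and plays the same best response, terminating in the same leaf. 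I expect this to be the main obstacle, as one must verify that removing the unused $P_i$-children cannot hand the followers a strictly better deviation; the point is that these children lie off every path consistent with $\sigma$, so the followers' best-response computation is unchanged. The information-set constraint (2) and the generic-form assumption, which guarantees a unique best response and a unique SPE, are what let me conclude equality of outcomes rather than a mere inequality.

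Combining these pieces would finish the proof. The key lemma shows that every cut is outcome-equivalent to a maximal cut, so the maximum leader utility over all cuts equals the maximum over maximal cuts, hence over all pure strategies $s_i$; this is exactly leader optimality, and the root of the expanded tree selects the corresponding cut $c_{s_i^*}$. In the ensuing subgame the other players play their SPE, which by the bijection is the Stackelberg best response $f(s_i^*)$, and genericity makes this response unique, so the two strategy profiles agree. Reading the equivalence in reverse, maximal cuts being a subset of all cuts gives the trivial inequality in the other direction, which shows that the expanded-tree SPE and the Stackelberg equilibrium are one and the same.
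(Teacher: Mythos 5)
Your proof is correct, and it rests on the same underlying correspondence as the paper's: identifying a leader strategy $s_i$ with the cut that removes every $P_i$-child not dictated by $s_i$, which is exactly your maximal cut $c_{s_i}$ and exactly the cut the paper constructs in its ($\Leftarrow$) direction. Where you differ is in decomposition and in one substantive point of rigor. The paper argues the two implications directly and tersely: for ($\Rightarrow$) it extracts the commitment from the dominating path, letting $P_i$ commit to the on-path moves $\gamepath^{(i)}$, and simply asserts that the follower response is the SPE of the subgame after the contract choice. This leaves implicit precisely the issue you isolate as your key lemma: the SPE of the expanded tree may select a \emph{partial} cut under which $P_i$ retains genuine choices, and one must argue this confers no advantage over full commitment; moreover, committing only to on-path moves does not specify a complete leader strategy, and off-path leader behavior is what disciplines follower best responses. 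Your lemma---pass from an arbitrary cut to $c_\sigma$ where $\sigma$ is $P_i$'s full SPE strategy in the post-cut subgame, and observe that since a $P_i$-cut removes only children of $P_i$-nodes, backward-induction values at every surviving node are unchanged, with genericity supplying uniqueness of the argmax at each step---closes both gaps and reduces the equivalence to a single equality of optima over cuts versus pure strategies, with the reverse inequality trivial since maximal cuts are a subset of all cuts. The paper's route is shorter; yours is more self-contained and makes explicit why non-maximal contracts add no power, which is the real content of the proposition.
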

\begin{proof}
    We show each implication separately:
    \begin{description}
        \item[$\Rightarrow$]
            SPE in the expanded tree $\gametree$ induces a Stackelberg equilibrium in the corresponding Stackelberg game where $P_i$
            commits to all moves in $\gamepath^{(i)}$.
            It is not hard to see that the follower best response $f(s_i^*)$ is defined by the SPE of the
            subgame arising after $\playern{i}$ makes the move $\gamepath_1$ choosing the contract in $\gametree$.
        \item[$\Leftarrow$]
            A Stackelberg equilibrium induces a SPE in the expanded tree $\gametree$ with the same utility:
            let $(s_i^*, f(s_i^*))$ be a Stackelberg equilibrium,
            observe that $s_i^*$ corresponds to a cut $\cut^{(i)} \subseteq \gametree$
            where $\playern{i}$ cuts away all nodes in $\gametree$ not dictated by $s_i^*$. 
            By letting the first move $\gamepath_1$ of $\playern{i}$ correspond to $\cut^{(i)}$,
            the best follower response $f(s_i^*)$ is the SPE in the resulting subgame, and
            hence $\utility(\gamepath) = \utility(s_i^*, f(s_i^*))$.\qed
    \end{description}
\end{proof}

\subsubsection{Multi-leader/multi-follower contracts} Several variants of the basic Stackelberg game has been considered in the literature with multiple leaders and/or followers \cite{multiple_leaders,liu1998stackelberg}. We can model this using smart contracts by forcing some of the contracts to independent of each other: formally, we say a contract is \emph{independent} if it makes the same cut in all subgames corresponding to different contracts. It is not hard to see that multiple leaders can be modelled by adding contracts for each leader, where the contracts are forced to be independent. $\hfill\diamond$

\subsubsection{Reverse Stackelberg contracts}

The reverse Stackelberg equilibrium is an attempt to generalize the regular Stackelberg equilibrium: here, the leader does not commit to a specific strategy \emph{a priori}, rather they provide the follower with a mapping $f$ from follower actions to best response leader actions, see e.g. \cite{reverse_stackelberg_def2,reverse_stackelberg_def} for a definition in the continuous setting. When the follower plays a strategy $s_{-i}$, the leader plays $f(s_{-i})$. This is strictly advantageous for the leader since as pointed out in \cite{reverse_stackelberg_threat}, they can punish the follower for choosing the wrong strategy.

In the following, if $\mathbf{p}$ is a path of length $\ell$, we denote by $\subgame(\mathbf{p})$ the subgame whose root is $\mathbf{p}_\ell$.

\begin{definition}
    Let $i$ be the index of the leader, and $-i$ the index of the follower. We say $(f(s_{-i}^*), s_{-i}^*)$ is a \emph{reverse Stackelberg equilibrium with leader $i$}
    if the following holds for every leader strategy $s_i$ and follower strategy $s_{-i}$, it holds:
    \begin{itemize}
        \item \emph{Leader best response}: $u_i(f(s_{-i}^*), s_{-i}^*) \geq u_i(s_i, s_{-i}^*)$.
        \item \emph{Follower optimality}:  $u_{-i}(f(s_{-i}^*), s_{-i}^*) \geq u_{-i}(f(s_{-i}),s_{-i})$.
            $\hfill\diamond$
    \end{itemize}
\end{definition}

\begin{proposition}
The reverse Stackelberg equilibrium for a two-player game with leader $P_i$ is equivalent to adding two smart contract moves to the game, one for $P_i$, and another for $P_{-i}$ (in that order).
\end{proposition}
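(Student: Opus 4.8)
The plan is to mirror the proof of the Stackelberg proposition, establishing both implications through a dictionary between the two cuts in the expanded tree and the pair $(f, s_{-i}^*)$. With the ordering leader-then-follower, the expanded tree has at its root a contract node for $P_i$ whose children are the leader cuts $\cut^{(i)}$; below each such child sits a contract node for $P_{-i}$ whose children are the follower cuts $\cut^{(-i)}$ of $\gametree \setminus \cut^{(i)}$; and below each pair $(\cut^{(i)}, \cut^{(-i)})$ we attach the residual game $\gametree \setminus \cut^{(i)} \setminus \cut^{(-i)}$, whose value is its SPE. Backward induction then proceeds in three stages: the residual games are solved, then $P_{-i}$ selects the follower cut maximizing $u_{-i}$, and finally $P_i$ selects the leader cut maximizing $u_i$.

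The central step is to exhibit the correspondence between a leader cut and a reverse-Stackelberg mapping. Given a leader cut $\cut^{(i)}$ and any follower strategy $s_{-i}$, let $\cut^{(-i)}_{s_{-i}}$ be the follower cut that pins the follower to $s_{-i}$ by removing every follower move inconsistent with it; I then define $f(s_{-i})$ to be the strategy $P_i$ plays in the SPE of $\gametree \setminus \cut^{(i)} \setminus \cut^{(-i)}_{s_{-i}}$. Because $P_{-i}$ deploys its contract after $P_i$, the follower commits to a full strategy to which the leader's already-fixed cut can respond differently depending on which follower moves survive; this is exactly what lets a single static cut encode a genuine feedback mapping $f$ rather than a fixed action, and it is the formal content of ``the first contract can depend on the second.'' Conversely, from a mapping $f$ I build the leader cut that, at every leader node, retains precisely the responses prescribed by $f$, and from $s_{-i}^*$ the matching follower cut; the generic-form assumption ensures each residual game has a unique SPE, so the dictionary is well defined in both directions.

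With this dictionary the two directions become routine. For ($\Leftarrow$), starting from a reverse-Stackelberg equilibrium $(f(s_{-i}^*), s_{-i}^*)$ I take the leader cut encoding $f$ and the follower cut encoding $s_{-i}^*$: follower optimality says exactly that $s_{-i}^*$ maximizes $u_{-i}$ over the follower's contract-node choices, and leader best response says that, among the leader strategies still permitted by the deployed contract, $f(s_{-i}^*)$ is optimal against $s_{-i}^*$, which is precisely what backward induction computes in the residual game; hence the constructed cuts form an SPE with $\utility(\gamepath) = \utility(f(s_{-i}^*), s_{-i}^*)$. For ($\Rightarrow$), reading the SPE of the expanded tree through the dictionary recovers $f$ from the leader's chosen cut and $s_{-i}^*$ from the follower's chosen cut, and the two backward-induction optimizations translate back into the two equilibrium conditions, with the leader's outer optimization over cuts matching the leader optimally designing $f$.

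The step I expect to be the main obstacle is making the dictionary faithful with respect to the equilibrium conditions, and in particular pinning down the quantifier in leader best response: it must range over leader strategies consistent with the deployed contract, since the entire point of the construction is that $P_i$ may commit to responses that are not best responses in the unconstrained base game (this is what makes the threat credible). I would therefore argue carefully that, once the follower is pinned to $s_{-i}^*$, backward induction in the residual game realizes the best response still available under the cut, while the leader's outer choice of cut realizes the optimal mapping. Finally I would check that the encodings of $f$ and $s_{-i}^*$ are legal cuts, namely that they leave every node with a path to a leaf and respect information sets, so that the correspondence lands inside the set of admissible moves in the expanded tree.
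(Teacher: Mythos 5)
There is a genuine gap, and it sits exactly where the proposition gets its content: your description of the expanded tree is not the paper's. The paper expands nested contract nodes \emph{depth-first}, so the follower's contract node is expanded first: below $P_i$'s contract node sits a $P_{-i}$-node whose children are the residual games $\gametree \setminus \cut^{(-i)}$, and only then is $P_i$'s contract node expanded --- over this \emph{already expanded} subtree. A leader cut is therefore a cut of the expanded subtree, i.e.\ it may remove a \emph{different} set of $P_i$-moves in each subgame $\gametree \setminus \cut^{(-i)}$; a single leader contract is literally a family of cuts indexed by the follower's cut, and that is precisely how it encodes the mapping $f$ (the paper's ($\Leftarrow$) direction reads: $P_i$'s first move is choosing the subgame in which, for every second move $s_{-i}$, they make the cut $f(s_{-i})$). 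You instead have the leader make one static cut $\cut^{(i)}$ of the base tree, the follower cut the residual, and you define $f(s_{-i})$ as the leader's SPE play in $\gametree \setminus \cut^{(i)} \setminus \cut^{(-i)}_{s_{-i}}$. Such an $f$ can only ever be a best response within a fixed restricted strategy set; it cannot encode the non-credible, follower-strategy-contingent punishments that reverse Stackelberg is about, so your claim that a ``single static cut'' encodes a genuine feedback mapping is false.

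Concretely: the leader moves first, choosing $L$ or $R$; $R$ ends at $(0,0)$; after $L$ the follower chooses $a \mapsto (3,2)$ or $b \mapsto (2,3)$. Reverse Stackelberg with $f(a)=L$, $f(b)=R$ yields $(3,2)$: the punishment $R$ is never executed and need not be credible. In your expansion, whatever static cut the leader makes, when the follower is pinned to $b$ the leader's SPE play in the residual is $L$ whenever $L$ survives (since $2 > 0$), so the only way to punish $b$ is to cut $L$ outright --- which also punishes $a$; the leader then gets at most $2$, and the equivalence fails in both directions. Your converse construction breaks for the same reason: the leader cut ``retaining precisely the responses prescribed by $f$'' is ill-defined when $f$ prescribes different leader moves at the same base-tree node for different follower strategies ($f(a)=L$, $f(b)=R$ at the root), and no legal cut of the base tree realizes it. In the paper's expansion there is no conflict: the leader cuts to $R$ inside the subgame where the follower's contract pins $b$, and keeps $L$ where it pins $a$, recovering $(3,2)$. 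Once the tree structure is repaired, your backward-induction bookkeeping and the rest of the dictionary go through essentially as in the paper's proof.
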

\begin{proof}
    We show each implication separately:
    \begin{description}
        \item[$\Rightarrow$]
            The SPE in the expanded tree induces a reverse Stackelberg equilibrium:
            for every possible follower strategy $s_{-i}$, we define $f(s_{-i})$ as the leader strategy in the SPE in the subgame $\subgame(\langle \gamepath_1, s_{-i} \rangle)$ after the two moves, where we slightly abuse notation to let $s_{-i}$ mean that $P_{-i}$ chooses a cut where their SPE is $s_{-i}$.
            Leader best response follows from the observation that $\gamepath_1$
            corresponds to the optimal set of cuts of $\playern{i}$ moves in response to every possible cut of of $\playern{-i}$ moves.

        \item[$\Leftarrow$]
            A reverse Stackelberg equilibrium induces an SPE in the expanded tree:
            let $(f(s^*_{-i}), s_{-i}^*)$ be a reverse Stackelberg equilibrium and let $f$ be the strategy of $P_i$ in the reverse Stackelberg game,
            then $P_i$ has a strategy in the two-contract game with the same utility for both players: namely,
            $P_i$'s first move is choosing the subgame in which for every second move $s_{-i}$ by $P_{-i}$ they make the cut $f(s_{-i})$. \qed
    \end{description}
\end{proof}

\section{Computational complexity}

Having defined our model of games with smart contracts, in this section we study the computational complexity of computing equilibria in such games. Note that we can always compute the equilibrium by constructing the expanded tree and performing backward induction in linear time.
The problem is that the expanded tree is very large: the expanded tree for a game of size $m$ with a single contract has $2^{O(m)}$ nodes since it contains all possible cuts. For every contract we add, the complexity grows exponentially. This establishes the rather crude upper bound of $\Sigma_k^\EXP$ for computing SPE
in games with perfect information and $k$ contracts. The question we ask if we can do better than traversing the entire expanded tree.

In terms of feasibility, our results are mostly negative: we show a lower bound that computing an SPE, in general, is infeasible for games with smart contracts. We start by considering the case of imperfect information where information sets allow for a rather straightforward reduction from \textsf{CircuitSAT} to games with one contract, showing \NP-completeness for single-contract games of imperfect information. This generalizes naturally to the $k$ true quantified Boolean formula problem (\textsf{$k$-TQBF}), establishing $\Sigma_k^\P$-hardness for games of imperfect information with $k$ contracts. On the positive side, we consider games of perfect information where we provide an algorithm for games and two contracts that runs in time $O(m\ell)$. However, when we allow for an unbounded number of contracts, we show the problem remains \PSPACE-complete by reduction from the generalization of \coloring\, described in \cite{3coloring_pspace}. We conjecture the problem to be \NP-complete for three contracts.

\subsection{Games with imperfect information, \NP-completeness}
We start by showing \NP-completeness for games of imperfect information by reduction from $\textsf{CircuitSAT}$. We consider a decision problem version of SPE: namely, whether or not a designated player can obtain a utility greater than the target value.

\subsubsection{Reduction.} Let $C$ be an instance of \textsf{CircuitSAT}. Note that we can start from any complete basis of Boolean functions, so it suffices to suppose the circuit $C$ consists only of \NAND\,with fanin 2 and fanout 1. We will now construct a game tree for the circuit: we will be using one player to model the assignment of variables, say player 1. The game starts with a contract move for player 1 who can assign values to variables by cutting the bottom of the tree: we construct the game such that player 1 only has moves in the bottom level of the tree. In this way, we ensure that every cut corresponds to assigning truth values to the variables. We adopt the convention that a payoff of 1 for player 1 is \emph{true} ($\top$), while a payoff of 0 for player 1 is \emph{false} ($\bot$). All nodes corresponding to occurrences of the same variable get grouped into the same information set, which enforces the property that all occurrences of the same variable must be assigned the same value.

\begin{figure}
    \centering
    \begin{tikzpicture}[
    level 1/.style={sibling distance=40mm},
    level 2/.style={sibling distance=20mm},
    level 3/.style={sibling distance=10mm}, level distance=1cm,align=center]
    \node[draw] {1}
    child{
        node[] {$\cdots$}
        child {
            node[draw, circle] {1}
            child { node {$\top$} }
            child { node {$\bot$} }}
        child {
            node[draw, circle] (nodeA1) {1}
            child { node {$\top$} }
            child { node {$\bot$} }}}
    child{
        node[] {$\cdots$}
        child {
            node[draw, circle] (nodeA2) {1}
            child { node {$\top$} }
            child { node {$\bot$} }}
        child {
            node[draw, circle] {1}
            child { node {$\top$} }
            child { node {$\bot$} }}};
    \draw[dashed,rounded corners=7]($(nodeA1)+(-.45,-.45)$)rectangle($(nodeA2)+(.45,.45)$);
    \end{tikzpicture}\quad\quad
    \begin{tikzpicture}
    [
    level 1/.style={sibling distance=12mm},
    level distance=1cm,align=center]
    \node {$\vdots$}
    child {
        node[draw,circle] {3}
        child {
            node[draw, circle] {2}
            child { node[draw,isosceles triangle, shape border rotate=90,yshift=-1cm] {$T^L$\\~} }
            child { node {$\bot^\prime$} }
            child { node[draw,isosceles triangle, shape border rotate=90,yshift=-1cm] {\vspace{-0.5cm}$T^R$\\~} } }
        child {
            node {$\top^\prime$}}
    };
    \end{tikzpicture}
    \caption{The basic structure of the reduction. Player 1 has a smart contract that can be used to assign values to the variables. The dashed rectangle denotes an information set and is used when there are multiple occurrences of a variable in the circuit. On the right, we see the \NAND-gate gadget connecting the left subgame $T^L$ and the right subgame $T^R$. We implement the gadget by instantiating the utility vectors such that player 2 chooses $\bot'$ if only if both $T^L$ and $T^R$ propagate a utility vector encoding true.}
\end{figure}
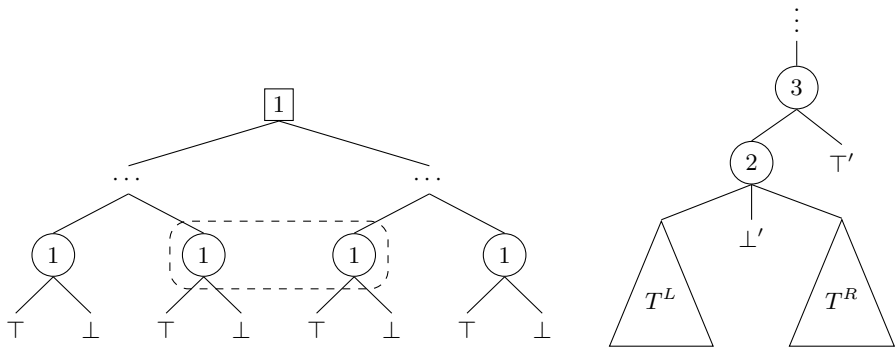

For the \NAND-gate, we proceed using induction: let $T^L, T^R$ be the trees obtained by induction, we now wish to construct a game tree gadget with \NAND-gate logic. To do this we require two players which we call player 2 and player 3. Essentially, player 2 does the logic, and player 3 converts the signal to the right format.  The game tree will contain multiple different utility vectors encoding true and false, which vary their utilities for players 2 and 3. Each \NAND-gate has a left tree and a right tree, each with their own utilities for true and false: $\bot^L, \bot^R; \top^L, \top^R$. The gadget starts with a move for player 3 who can choose to continue the game, or end the game with a true value $\top'$. If they continue the game, player 2 has a choice between false $\bot'$ or playing either $T^L$ or $T^R$. To make the gadget work like a \NAND-gate we need to instantiate the utilities to make backward induction simulate its logic. The idea is to make player 2 prefer both $\bot^L$ and $\bot^R$ to $\bot'$, which they, in turn, prefer to $\top^L$ and $\top^R$. As a result, player 2 propagates $\bot'$ only if both $T^L, T^R$ are true, otherwise, it propagates $\bot^L$ or $\bot^R$. Finally, we must have that player 3 prefers $\top'$ to both $\bot^L$ and $\bot^R$, while they prefer $\bot'$ to $\top', \top^L$ and $\top^R$. This gives rise to a series of inequalities:
\begin{align*}
\bot^L_2 > \bot'_2 > \top^L_2 && \top'_3 > \bot^L_3 && \bot'_3 > \top^L_3  && \bot_3' > \top'_3\\
\bot^R_2 > \bot'_2 > \top^R_2 && \top'_3 > \bot^R_3 && \bot'_3 > \top^R_3
\end{align*}

\noindent We can instantiate this by defining $\top, \bot$. For the base case corresponding to a leaf, we let $\bot = (0,1,0), \top=(1,0,0)$. We then define recursively:
\begin{align*}
\top' &= \left(1,0,1+\max(\top^L_3, \top^R_3)\right)\\
\bot' &= \left(0, \frac{\min(\bot^L_2, \bot^R_2) + \max(\top^L_2, \top^R_2)}2, 2+\max\left(\top_3^L, \top_3^R\right)\right)
\end{align*}

\noindent It is not hard to verify that these definitions make the above inequalities hold true. As a result, the gadget will propagate a utility vector corresponding to true if and only if not both subtrees propagate true.

\begin{theorem}\label{thm:one_contract_np_complete}
    Computing an SPE in three-player single-contract games of imperfect information is $\NP$-complete.
\end{theorem}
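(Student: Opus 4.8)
The plan is to prove $\NP$-completeness by establishing membership in $\NP$ and $\NP$-hardness separately. Membership is the easier direction: given a game of imperfect information with a single contract, a witness consists of the cut $\cutof{1}$ chosen by player 1 (equivalently, the truth assignment to the variables it encodes). This witness has size polynomial in the game tree, and given the cut, the resulting game $\gametree \setminus \cutof{1}$ has perfect information in the remaining structure (or at least can be evaluated by backward induction in linear time), so we can verify in polynomial time that the designated player achieves utility at least the target value. Hence the decision problem lies in $\NP$.

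For $\NP$-hardness, the plan is to formalize the reduction from \textsf{CircuitSAT} sketched in the preceding text. First I would describe the global structure: the root is a contract move for player 1, whose cuts range exactly over assignments of truth values to the circuit's input variables, enforced by the information sets grouping all occurrences of each variable (so a cut must `cut the same' in every occurrence, matching the information-set constraint from the definition of a cut). Then I would argue by structural induction on the circuit $C$ that the \NAND-gate gadget, built from the player-2 and player-3 subtrees with the utility vectors $\top,\bot,\top',\bot'$ defined recursively above, propagates a utility vector encoding $\top$ at its output if and only if it is \emph{not} the case that both input subtrees $T^L$ and $T^R$ propagate $\top$. The key step here is verifying that the stated chain of inequalities (e.g. $\bot^L_2 > \bot'_2 > \top^L_2$ and the player-3 constraints) forces backward induction to behave as claimed: player 2 prefers $\bot'$ over the true-signals but prefers the false-signals $\bot^L,\bot^R$ over $\bot'$, so player 2 outputs $\bot'$ precisely when both children are true; and player 3's preferences ensure the signal is converted to the correct output format at each level. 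I would then conclude that the root value encodes $\top$ for player 1 under the optimal cut if and only if the chosen assignment satisfies $C$, so the designated player can meet the target utility iff $C$ is satisfiable.

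The main obstacle I anticipate is the inductive bookkeeping in the gadget correctness argument, specifically making precise what ``propagates a utility vector encoding true/false'' means across levels when the numerical utility values for players 2 and 3 change from gate to gate (the recursive definitions of $\top'$ and $\bot'$ deliberately shift the player-3 coordinate upward by $1$ or $2$ at each level to preserve the required inequalities). I would need to state a clean invariant — for instance, that at the output of any gadget the pair of vectors representing $\top$ and $\bot$ satisfies the eight inequalities relative to the locally-defined $\top',\bot'$ — and check that the recursive definitions maintain this invariant up the tree, so that backward induction at each \NAND-gate correctly reads off the Boolean value of its inputs regardless of the absolute magnitudes. Care must also be taken that the construction is polynomial-time: the tree has size linear in $|C|$ and the utilities, though growing with depth, remain polynomially bounded in bit-length.

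Finally, I would remark that genericity of the utilities (the generic-form assumption from the model section) guarantees a unique SPE, so the decision question is well-posed, and that the reduction uses exactly three players — one for the variable assignment via the contract, and two for the gate logic and signal conversion — matching the statement of \cref{thm:one_contract_np_complete}.
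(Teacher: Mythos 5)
Your proposal is correct and follows essentially the same route as the paper: \NP-membership via the cut as a linear-time-verifiable witness, and \NP-hardness via the \textsf{CircuitSAT} reduction with player 1's contract encoding variable assignments (consistency enforced by information sets) and the player-2/player-3 \NAND-gadget with the recursively defined utilities $\top',\bot'$. Your added care about the inductive invariant across gate levels and the polynomial bit-length of utilities merely fills in details the paper leaves implicit.
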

\begin{proof}
    We consider the decision problem of determining whether or not in the SPE, player 1 has a utility of 1. By construction of the information sets, any strategy is a consistent assignment of the variables. It now follows that player 1 can get a payoff $>0$ if and only if there is an assignment of the variables such that the output of the circuit is true. This shows \NP-hardness. Now, it is easily seen that this problem is in \NP, since a witness is simply a cut that can be verified in linear time in the size of the tree. Completeness now follows using our reduction from \textsf{CircuitSAT}.\qed
\end{proof}

\begin{remark}
Our reduction also applies to the two-player non-contract case by a reduction from circuit value problem. This can be done in logspace since all the gadgets are local replacements. In doing so, we reestablish the result of \cite{spe_p_complete}, showing that computing an SPE on two-player games is \P-complete. $\hfill\diamond$
\end{remark}

\subsection{Games with imperfect information, \PSPACE-hardness}

In this section, we show that computing the SPE in a game with $k$ contract moves is $\Sigma_k^\P$-complete, in the general case with imperfect information. Generalizing the previous result of \NP-hardness to $k$ contracts is fairly straightforward. Our claim is that the resulting decision problem is $\Sigma_k^\P$-hard so we obtain a series of hardness results for the polynomial hierarchy. This is similar to the results obtained in \cite{polyhierarchy} where the value problem for a competitive analysis with $k+1$ players is shown to be hard for $\Sigma_k^\P$.

Formally, we consider the following decision problem with target value $V$ for a game tree $T$ with $k$ contract players: let $T^\prime$ be the expanded tree with contracts for players $P_1, P_2, \ldots P_k$ in ascending order. Can player $P_1$ make a cut in $T^\prime$ such that their payoff is $\geq V$?

To show our claim, we proceed using reduction from the canonical $\Sigma_k^\P$-complete problem \textsf{$k$-TQBF}, see e.g. \cite{computers_and_intractability} for a formal definition.

\begin{theorem}\label{thm:poly_hierarchy-hard}
    Computing an SPE in $2+k$ player games of imperfect information is $\Sigma_k^\P$-hard.
\end{theorem}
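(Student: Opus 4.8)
The plan is to reduce from \textsf{$k$-TQBF}. Given an instance
\[
\exists X_1\,\forall X_2\,\exists X_3 \cdots Q_k X_k\; \phi(X_1,\ldots,X_k),
\]
with $\phi$ a circuit over \NAND-gates, I would reuse verbatim the circuit gadget and the two ``circuit players'' from the proof of \cref{thm:one_contract_np_complete}, whose backward induction propagates the \NAND logic and emits a leaf vector encoding \emph{true} or \emph{false}. The single assignment player is replaced by $k$ assignment players $\playern{1},\ldots,\playern{k}$, one per quantifier block, giving $k+2$ players in total. Player $\playern{i}$ owns the bottom-level variable nodes for the variables of $X_i$; each such node has a \emph{true} and a \emph{false} child, and all occurrences of a given variable lie in one information set, so that the requirement that a cut respect information sets forces any admissible cut of $\playern{i}$ to be a consistent (possibly partial) assignment of $X_i$. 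The contract moves are stacked at the top in ascending order, so the expanded tree $\gametree'$ nests $\playern{1}$'s contract outermost and $\playern{k}$'s innermost.

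The second ingredient is to align the utilities with the quantifier parity. I would extend the base vectors $\top,\bot$ so that, in the coordinate of $\playern{i}$, an existential block (odd $i$) pays $1$ on $\top$ and $0$ on $\bot$, while a universal block (even $i$) pays $0$ on $\top$ and $1$ on $\bot$; the two circuit-player coordinates are inherited unchanged from the recursion of \cref{thm:one_contract_np_complete}. An existential player is then rewarded for forcing the circuit to \emph{true} and a universal player for forcing it to \emph{false}. Because the contracts expand in ascending order, backward induction in $\gametree'$ resolves the subgames innermost-first: the circuit players evaluate $\phi$ on the committed assignment, then $\playern{k}$ optimizes its block, then $\playern{k-1}$, up to $\playern{1}$. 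Since each player optimizes its own coordinate, this alternating chain of maximizations and---from $\playern{1}$'s viewpoint---minimizations is exactly the evaluation of the quantifier prefix. Choosing a target $V$ that separates the two possible payoffs of $\playern{1}$, the equivalence ``$\playern{1}$ secures payoff $\geq V$ in $\gametree'$ iff the formula is true'' then follows by induction on the number of blocks. Since every gadget is a local replacement the reduction is computable in logarithmic space, and the generic-form assumption is restored by an arbitrarily small perturbation preserving all the strict inequalities of the gadget.

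The step I expect to be the main obstacle is showing that the contracts and information sets enforce the quantifier \emph{order}, not merely the partition of variables into blocks. The danger is that a cut need not assign every variable of its block: a universal player $\playern{2}$ could leave some variable nodes uncut and play them only in the bottom subgame---after the inner existential blocks $\playern{3},\ldots,\playern{k}$ have already committed---thereby gaining an adaptive response that would illegitimately reorder $\forall X_2\,\exists X_3$. I would preclude this by designing the information sets so that each assignment player perceives its variable nodes through a single set common to all inner contract choices; its action there is then forced to a single non-adaptive value, so that committing the full assignment already in the contract is without loss of generality. Verifying that this information-set structure is preserved under the recursive expansion, and that it remains compatible with the proper-subgame decomposition that backward induction relies on, is the delicate part of the argument.
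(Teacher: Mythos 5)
Your reduction is essentially the paper's own proof: one contract player per quantifier block with contracts stacked in quantifier order, parity-aligned utilities on $\top/\bot$ for existential versus universal players, and the \NAND-gate gadget and circuit players of \cref{thm:one_contract_np_complete} reused unchanged, so the proposal matches the paper's argument at (indeed, beyond) the level of detail of its proof sketch. The obstacle you flag is real and worth noting: the paper's sketch silently assumes each quantifier player commits their whole block in the cut, and does not address the possibility of leaving variable nodes uncut so as to play them adaptively after inner contracts are revealed (which would effectively reorder the quantifiers); your cross-copy information-set repair is the natural way to close this, so your analysis refines rather than diverges from the published argument.
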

\begin{proof}[sketch]
    We extend our reduction from \cref{thm:one_contract_np_complete} naturally to the quantified satisfiability problem. In our previous reduction, the contract player wanted to satisfy the circuit by cutting as to assign values to the variables in the formula. Now, for each quantifier in $\psi$, we add a new player with a contract, whose moves range over exactly the variables quantified over. The players have contracts in the same order specified by their quantifiers. The idea is that players corresponding to $\forall$ try to sabotage the satisfiability of the circuit, while those corresponding to $\exists$ try to ensure satisfiability. We encode this in the utility vectors by giving $\exists$-players a utility of $1$ in $\top$ and 0 utility in $\bot$, while for the $\forall$-players, it is the opposite. It is not hard to see that $\psi$ is true, only if $P_1$ can make a cut, such that for every cut $P_2$ makes, there exists a cut for $P_3$ such that, ..., the utility of $P_1$ is 1. This establishes our reduction. \qed
\end{proof}

\begin{remark}
    We remark that it is not obvious whether or not the corresponding decision problem is contained within $\Sigma_k^\P$. It is not hard to see we can write a Boolean formula equivalent to the smart contract game in a similar manner as with a single contract. The problem is that it is unclear if the innermost predicate $\phi$ can be computed in polynomial-time. It is not hard to see that some smart contracts do not have a polynomial description, i.e. we can encode a string $x \in \{0,1\}^*$ of exponential length in the contract. However, there might be an equivalent contract that \emph{does} have a polynomial-time description. By equivalent, we mean one that has the same dominating path. This means that whether or not $\Sigma_k^\P$ is also an upper bound essentially boils down to whether or not every contract has an equivalent contract with a polynomial description. $\hfill\diamond$
\end{remark}

\subsection{Games with perfect information, two contracts, upper bound}
In this section, we consider two-player games of perfect information and provide a polynomial-time algorithm for computing an SPE in these games.
Specifically, for a game tree of size $m$ with $\ell$ terminal nodes with two contract players (and an arbitrary number of non-contract players), we can compute the equilibrium in time $O(m\ell)$. Our approach is similar to that of \cite{inducibleregion}, in that we compute the inducible region for the first player, defined as the set of leaves they are able to `induce' by making cuts in the game tree.

Let $A,B$ be two sets. We then define the set of outcomes from $A$ reachable using a threat against player $i$ from outcomes in $B$ as follows:
\begin{equation*}
    \label{eq:threaten}
    \text{threaten}_i(A,B) = \{x \in A \mid \exists\, y \in B.\, x_i > y_i\}
\end{equation*}
As mentioned, we will compute the \emph{inducible region} for the player with the first contract, defined as the set of outcomes reachable with a contract. Choosing the optimal contract is then reduced to a supremum over this region.
\begin{definition}
    Let $G$ be a fixed game. We denote by $\mathscr{R}(P_1)$ (resp. $\mathscr{R}(P_1,P_2)$) the \emph{inducible region of $P_1$}, defined as the set of outcomes reachable by making a cut in $G$ in all nodes owned by $P_1$. $\mathscr{R}(P_1)$ is a tuple $(\mathbf{u}, c_1)$ where $\mathbf{u} \in \mathbb{R}^n$ is the utility vector, and $c_1$ is the contract (a cut) of $P_i$. $\hfill\diamond$
\end{definition}

\subsubsection{Algorithm.} Let $G$ be the game tree in question and let $k$ be a fixed integer. As mentioned, we assume without loss of generality that $G$ is in \emph{generic form}, meaning all non-leafs in $G$ have out-degree exactly two and that all utilities for a given player are distinct such that the ordering of utilities is unique. We denote by $P_1, P_2$ the players with contracts and assume that $P_i$ has the $i^\text{th}$ contract. We will compute the inducible regions in $G$ for $P_1$ (denoted $S$ for \emph{self}), and for $(P_1,P_2)$ (denoted $T$ for \emph{together}) by a single recursive pass of the tree. In the base case with a single leaf with label $\mathbf{u}$ we have $S=T=\{\mathbf{u}\}$. For a non-leaf, we can recurse into left and right child, and join together the results. The procedure is detailed in \cref{algo:inducibleregion}.

\begin{algorithm}
    \SetAlgoLined
    \Switch{$G$}{
        \textbf{case} $\texttt{Leaf}(u)$:\\\Indp\Return{$(\{u\},\{u\})$}\\~\\\Indm

        \textbf{case} $\texttt{Node}(G^L, G^R, i)$:\\\Indp
        $(S^L, T^L) \gets \texttt{InducibleRegion}(G^L)$\\
        $(S^R, T^R) \gets \texttt{InducibleRegion}(G^R)$\\
        \uIf{$i=1$}{
            $T \gets T^L \cup T^R$\\
            $S \gets S^L \cup S^R \cup \text{threaten}_{2}(T^L \cup T^R, S^L \cup S^R) $
        }
        \uElseIf{$i=2$}{
            $T \gets T^L \cup T^R$\\
            $S \gets \text{threaten}_{2}(T^L, S^R) \cup \text{threaten}_{2}(T^R, S^L) $
        }
        \Else{
            $T \gets \text{threaten}_{i}(T^L, T^R) \cup \text{threaten}_{i}(T^R, T^L)$\\
            $S' \gets \text{threaten}_{i}(S^L, S^R) \cup \text{threaten}_{i}(S^R, S^L)$\\
            $S \gets S' \cup \text{threaten}_2(T,S')$
        }
        \Return{$(S,T)$}
    }

    \caption{$\texttt{InducibleRegion}(G)$}
    \label{algo:inducibleregion}
\end{algorithm}

\begin{theorem}\label{thm:two_contracts_feasibility}
    An SPE in two-contract games of perfect information can be computed in time $O(m\ell)$.
\end{theorem}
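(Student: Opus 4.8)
The plan is to establish two things about \cref{algo:inducibleregion}: that the pair it returns at the root equals $(\mathscr{R}(P_1),\mathscr{R}(P_1,P_2))$, and that it runs within the claimed bound. Once the inducible region $S=\mathscr{R}(P_1)$ of the first contract player is known at the root, the SPE outcome is recovered as $\argmax_{(\mathbf{u},c_1)\in S} u_1$, which yields both $P_1$'s optimal contract $c_1$ and the equilibrium utility vector; the remaining strategies ($P_2$'s responding contract and the backward-induction play of the non-contract players) are recovered by standard bookkeeping along the recursion, so I would focus the argument on the two regions.

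For correctness I would argue by structural induction on $G$ that the returned pair equals $(\mathscr{R}(P_1),\mathscr{R}(P_1,P_2))$, the base case being immediate since a leaf offers a unique outcome to both regions. For the inductive step I would split on the owner $i$ of the root. The together region $T$ is the conceptually cleaner one: at a contract node ($i\in\{1,2\}$) the owner can unilaterally force either branch, so $T=T^L\cup T^R$; at a non-contract node, player $i$ picks whichever child gives the larger utility, and $P_1,P_2$ can, by cutting, present any outcome of $T^L$ against any outcome of $T^R$, so an outcome $x\in T^L$ survives precisely when there is a $y\in T^R$ with $x_i>y_i$, which is exactly $\text{threaten}_i(T^L,T^R)$ (and symmetrically for the right child). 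The key identity underlying both correctness and the efficient implementation is that $\exists y\in B.\,x_i>y_i$ is equivalent to $x_i>\min_{y\in B} y_i$.

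The delicate part—and the step I expect to be the main obstacle—is the self region $S$, because it must capture the reverse-Stackelberg interaction in which $P_1$ commits first and $P_2$ best-responds with its own cut. I would justify the three cases by tracking what $P_1$ can induce given that $P_2$ subsequently optimizes $u_2$: at a non-contract node the contract players threaten player $i$ exactly as in the $T$-recursion but now only with outcomes they can themselves still induce, giving $S'=\text{threaten}_i(S^L,S^R)\cup\text{threaten}_i(S^R,S^L)$, while the additional term $\text{threaten}_2(T,S')$ formalizes that $P_1$ can coerce $P_2$ into cooperating toward any jointly-inducible outcome $x\in T$ as long as $P_1$ holds a credible punishment $y\in S'$ with $x_2>y_2$. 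The $i=1$ and $i=2$ cases are then obtained by specializing this reasoning to a node where, respectively, $P_1$ or $P_2$ additionally controls the branch taken. The crux is showing both inclusions: that every outcome produced by these expressions is genuinely inducible via an explicit commitment-and-response argument, and conversely that no inducible outcome is missed, which I would argue by contradiction against the minimality of the relevant threatening outcome.

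For the running time, I would first observe that, after deduplicating outcomes by the leaf they route to (using genericity, so that distinct leaves carry distinct utilities), each of $S_v$ and $T_v$ is a subset of the outcomes appearing in the subtree at $v$ and hence has size at most $\ell_v\le\ell$, where $\ell_v$ denotes the number of leaves beneath $v$. Each union combines regions from disjoint subtrees and so introduces no collisions, and each $\text{threaten}_i(A,B)$ is computable in time $O(|A|+|B|)$ by first computing $\min_{y\in B} y_i$ and then filtering $A$. Thus every node performs $O(\ell)$ work, and since the tree has $m$ nodes the total is $O(m\ell)$; the cuts themselves are stored implicitly via backpointers and reconstructed once at the end in $O(m)$ time, so they do not affect the bound.
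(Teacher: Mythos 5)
Correct, and essentially the same as the paper's own proof: you establish by the same induction that \cref{algo:inducibleregion} returns $S=\mathscr{R}(P_1)$ and $T=\mathscr{R}(P_1,P_2)$ --- constructive contract-building (commit-and-punish) for one inclusion, and for the other a contradiction from the absence of a credible threat, i.e.\ every punishment $P_1$ can self-induce being strictly better for $P_2$ than the target outcome, which is exactly your ``minimality of the threatening outcome'' argument --- and then recover the SPE as $\argmax_{u\in S}u_1$. Your runtime accounting (regions of size at most $\ell$ after deduplication, $\text{threaten}_i$ in linear time via a single $\min$, over $O(m)$ recursion steps) matches the paper's $O(m\ell)$ bound, with slightly more implementation detail than the paper spells out.
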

\begin{proof}
    First, the runtime is clearly $O(m\ell)$ since the recursion has $O(m)$ steps where we need to maintain two sets of size at most $\ell$. For correctness, we show something stronger: let $\inducible(P_1)$ be the inducible region for $P_1$ in the expanded tree and $\inducible(P_1, P_2)$ be the inducible region of $(P_1, P_2)$. Now, let $(S, T) = \texttt{InducibleRegion}(G)$.
    Then we show that $S = \inducible(P_1)$ and $T = \inducible(P_1,P_2)$. This implies that $\argmax_{u \in S} u_1$ is the SPE. The proof is by induction on the height $h$ of the tree. As mentioned, we assume that games are in \emph{generic form}. This base case is trivial so we consider only the inductive step.

    Necessity follows using simple constructive arguments: for $S$ and $i=1$, then for every $(\mathbf{u},c) \in S^\ell$, we can form contract where $P_{1}$ chooses left branch and plays $c$. And symmetrically for $S^R$. Similarly, for every $(\mathbf{u},c_1,c_2) \in T^L$ and $(\mathbf{v},c') \in S^L$ can form contract where $P_{1}$ plays $c_1$ in all subgames where $P_{2}$ plays $c_2$; and plays $c'$ otherwise. Then $\mathbf{u}$ is dominating if and only if $\mathbf{u}_2 > \mathbf{v}_2$. Similar arguments hold for the remaining cases.

    \noindent For sufficiency, we only show the case of $i=1$ as the other cases are similar. Assume (for contradiction) that there exists $(\mathbf{u}, c_1) \in \inducible(P_1) \setminus S$, i.e. there is a $P_1$-cut $c_1$ such that $\mathbf{u}$ is dominating. Then,
        \begin{align*}
        (\mathbf{u},c_1) &\in (T^L \cup T^R) \setminus (S^L \cup S^R \cup \text{threaten}_2(T^L \cup T^R, S^L \cup S^R)) \\
        &= \{ \mathbf{v} \in (T^L \cup T^R) \setminus (S^L \cup S^R) \mid \forall \mathbf{v}' \in S^L \cup S^R.\, \mathbf{v}_2 < \mathbf{v}'_2\}
        \end{align*}
    That is, $\mathbf{u}$ must be a utility vector that $P_1$ and $P_2$ can only reach in cooperation in a one of the two sub-games, say by $P_2$ playing $c_2$. However, for every cut that $P_1$ makes, the dominating path has utility for $P_2$ that is $> \mathbf{u}_2$, meaning $P_2$ strictly benefits by not playing $c_2$. But this is a contradiction since we assumed $\mathbf{u}$ was dominating.\qed
\end{proof}

\subsection{Games with perfect information, unbounded contracts, \PSPACE-hardness}

We now show that computing an SPE remains \PSPACE-complete when considering games with an arbitrary number of contract players.
We start by showing \NP-hardness and generalize to \PSPACE-hardness in a similar manner as we did for \cref{thm:poly_hierarchy-hard}. The reduction is from \coloring: let $(V,E)$ be an instance of \coloring\, and assume the colors are $\{R,G,B\}$.
The intuition behind the \NP-reduction is to designate a coloring player $\Pcolor$,
who picks colors for each vertex $u \in V$ by restricting his decision space in a corresponding move using a contract.
They are the first player with a contract.
This is constructed using a small stump for every edge $e \in E$ with three leaves $R_u, G_u, B_u$.
We also have another player $\Pcheck$ whose purpose is to ensure no two adjacent nodes are colored the same.
We attach all stumps to a node owned by $\Pcheck$ such that $\Pcheck$ can choose among the colors chosen by $\Pcolor$.
If $\Pcolor$ are able to assign colors such that no adjacent nodes share a color,
then $\Pcolor$ maximizes their utility,
however, if no such coloring exists then $\Pcheck$ can force a bad outcome for $\Pcolor$.
It follows that $\Pcolor$ can obtain good utility if and only there is a valid coloring.

\begin{figure}\label{fig:reduction}
	\centering
	\scalebox{0.85}{
		\begin{tikzpicture}
		[
		level 8/.style={sibling distance=25mm},
		level 9/.style={sibling distance=8mm},
		level distance=1.4cm,align=center,
		emph/.style={edge from parent/.style={ very thick,draw}},
		norm/.style={edge from parent/.style={solid,black,thin,draw}}]]
		
		\node[draw] {$\Pcheck$}
		child {
			node[draw,yshift=0.5cm] {$\Pcolor$}
			child {
				node[draw,yshift=0.5cm] {$P_{u_1,R}$}
				child {
					node[draw,yshift=0.5cm] {$P_{u_2,R}$}
					child {
						node[yshift=0.5cm] {$\vdots$}
						child {
							node[draw,circle,yshift=0.25cm] {$P_{u_1,R}$}
							child[emph] {
								node {$\vdots$}
								child[norm] {
									node[draw,thin,circle,yshift=0.2cm] {$\Pcheck$}
									child {
										node[draw,circle] {$\Pcolor$}
										child { node {$R_{u_1}$}}
										child { node {$G_{u_1}$}}
										child { node {$B_{u_1}$}}
									}
									child {
										node[draw,circle] {$\Pcolor$}
										child { node {$R_{u_2}$}}
										child { node {$G_{u_2}$}}
										child { node {$B_{u_2}$}}
									}
									child{
										node {$\cdots$}
									}
									child {
										node[draw,circle] {$\Pcolor$}
										child { node {$R_{u_n}$}}
										child { node {$G_{u_n}$}}
										child { node {$B_{u_n}$}}
									}
								}
							}
							child {
								node[draw,circle,xshift=2cm] {$P_{u_2,R}$}
								child[emph] {
									node[yshift=0.25cm,xshift=-0.2cm] {$\top_{(u_1,u_2),R}$}
								}
								child {
									node[yshift=0.25cm,xshift=0.2cm] {$\bot_{(u_1,u_2),R}$}
								}
							}
		}}}}};
		\end{tikzpicture}}
	\caption{The structure of the reduction. First, $\Pcolor$ is allowed to assign a coloring of all vertices. If there is no \coloring\, of the graph, there must be some vertex $(u_1,u_2)$ where both vertices are colored the same color $c$. In this case, $\Pcheck$ can force both $c_{u_1}, c_{u_2}$, which are undesirable to $P_{u_1,c},$ resp. $P_{u_2,c}$: then in every $P_{u_1,c}$-contract where they do not commit to choosing $P_{u_2,c}$, $\Pcheck$ cuts as to ensure $c_{u_1}$ and analogously for $P_2$. It follows that $\Pcheck$ can get $\bot$ if and only if the graph is not 3-colored. Then $\Pcolor$ can get a different outcome from $\bot$ if and only if they can 3-color the graph.}
	\label{fig:my_label}
\end{figure}
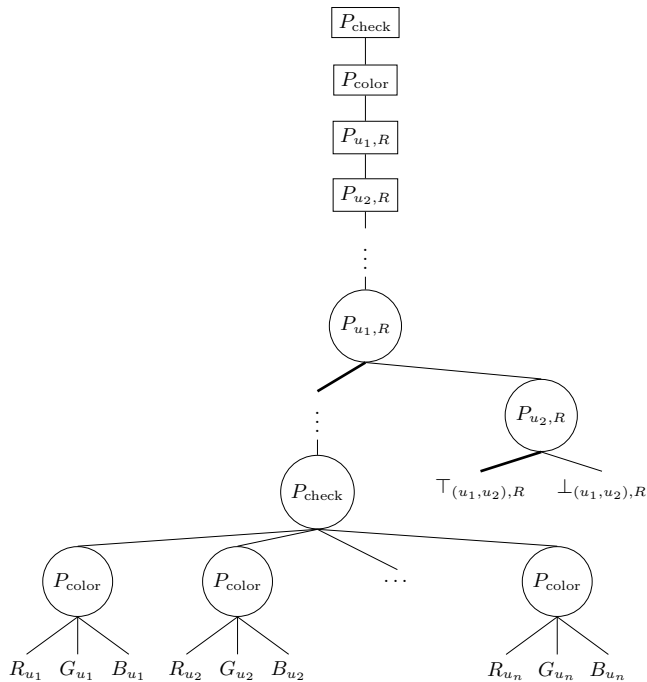

\subsubsection{Reduction.} We add six contract players for every edge in the graph. Specifically, for every edge $(u,v) \in E$ and every color $c \in \{R,G,B\}$, we introduce two new contract players $P_{u,c}$ and $P_{v,c}$ who prefer any outcome except $c_u$ (resp. $c_v$) being colored $c$. That is, if $c=R$, then the leaf $R_u$ has a poor utility for $P_{u,R}$. We add moves for $P_{u,c}$ and $P_{v,c}$ at the top of the tree, such that if they cooperate, they can get a special utility vector $\bot_{u,v}$ which has a poor utility for $\Pcolor$ and great utility for $\Pcheck$, though they themselves prefer any outcome in the tree (except $c_u$, resp. $c_v$) to $\bot_{u,v}$. We ensure that $\Pcheck$ has a contract directly below $\Pcolor$ in the tree. If no coloring exists, then $\Pcheck$ can force a bad outcome for both $P_{u,c}, P_{v,c}$ in all contracts where they do not commit to choosing $\bot_{u,v}$. Specifically, $\Pcheck$ first threatens $P_{u,c}$ with the outcome $c_u$, and subsequently threatens $P_{v,c}$ with $c_v$. Though they prefer any other node in the tree to $\bot_{u,v}$, they still prefer $\bot_{u,v}$ to $c_u$, $c_v$, meaning they will comply with the threat. This means $\Pcolor$ will receive a poor outcome if the coloring is inconsistent. It follows that $\Pcolor$ will only receive a good payoff if they are able to 3-color the graph, see e.g \cref{fig:reduction} for an illustration.

\begin{theorem}\label{thm:perfect_information_pspace_hard}
    Computing an SPE in smart contract games of perfect information is $\PSPACE$-hard when we allow for an unbounded number of contract players.
\end{theorem}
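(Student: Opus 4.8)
The plan is to reduce from the $\PSPACE$-complete generalization of \coloring\ described in \cite{3coloring_pspace}, extending the $\NP$-hardness reduction just given in exactly the spirit that \cref{thm:poly_hierarchy-hard} extended the single-contract CircuitSAT reduction to $k$-TQBF. Recall that in the $\NP$-reduction a single coloring player $\Pcolor$ commits, via its contract, to a coloring of all vertices, while the checker apparatus ($\Pcheck$ together with the edge-players $P_{u,c}$, $P_{v,c}$) forces a poor outcome for $\Pcolor$ precisely when the committed coloring is improper. The generalized coloring problem is a game in which the vertices are partitioned among several players who set colors in a fixed alternating order, with some players aiming to reach a proper coloring and the others trying to sabotage it; the instance is a yes-instance iff the proper-coloring side has a winning strategy.

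First I would replace the single $\Pcolor$ with a sequence of coloring players, one for each round of the alternation, giving each a contract whose moves range exactly over the vertices that player is responsible for coloring, and ordering the contracts to match the order of play in the coloring game. I would instantiate the utility vectors so that the coloring-achieving players receive high utility on a proper coloring (encoded by $\top$) and low utility on an improper one (encoded by $\bot$), while the sabotaging players have the opposite preference, the direct analog of the $\exists$/$\forall$ assignment used in \cref{thm:poly_hierarchy-hard}. The edge-gadgets and the $\Pcheck$ threat structure carry over verbatim, since they only verify local consistency of the \emph{final} coloring and are indifferent to which player supplied each color. Note that the reduction already uses an unbounded ($O(|E|)$) number of contracts, so the unbounded regime is essential throughout.

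The correctness argument then mirrors the alternating reading of the expanded tree: by the ordering of the contracts, the first coloring player can guarantee a good outcome if and only if, for its choice of colors, there is a response by the next (sabotaging) player, for which there is a further response, and so on, such that the resulting coloring is proper, which is exactly the statement that the proper-coloring side wins the generalized \coloring\ game. Since the checker forces $\bot$ precisely on improper colorings, $\Pcolor$ obtains a payoff different from $\bot$ iff the instance is a yes-instance, establishing $\PSPACE$-hardness. The reduction is polynomial because we add only a constant number of players and gadgets per edge, plus one contract per coloring round.

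The main obstacle I expect is verifying that the $\Pcheck$ threats still compose correctly once there are many interleaved coloring players rather than a single one: I must check that each edge-player strictly prefers complying with $\Pcheck$'s threat (accepting $\bot_{u,v}$) to being forced into the bad leaf $c_u$ or $c_v$, uniformly across all intermediate contract choices, and that no coloring player can exploit the presence of later contracts to escape the threat structure. Pinning down a single consistent numerical instantiation of the utility vectors that makes all of these inequalities hold simultaneously for an unbounded number of contract players is the delicate part; everything else is a routine lifting of the $\NP$-reduction.
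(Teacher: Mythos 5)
Your proposal takes essentially the same approach as the paper's own proof, which is itself only a brief sketch: it likewise reduces from the generalized \coloring{} game of \cite{3coloring_pspace} and applies the $\exists$/$\forall$ utility trick of \cref{thm:poly_hierarchy-hard} by introducing the alternating coloring players, with contracts ordered to match the rounds of play, between $\Pcolor$ and $\Pcheck$, while reusing the edge-gadget threat structure unchanged. The delicate point you flag at the end --- verifying that the $\Pcheck$ threat inequalities compose consistently across the interleaved contract players --- is legitimate, but the paper leaves it equally unaddressed, so your writeup is if anything more explicit than the original.
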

\begin{proof}
    Let $(V,E)$ be an instance of \coloring. Our above reduction works immediately for $k=1$, showing \NP-hardness. To show \PSPACE-hardness we reduce from a variant of \coloring\,as described in \cite{3coloring_pspace} where players alternately color an edge and use a similar trick as \cref{thm:poly_hierarchy-hard} by introducing new players between $\Pcolor$ and $\Pcheck$.\qed
\end{proof}

\noindent It remains unclear where the exact cutoff point is, though we conjecture it to be for three contracts: clearly, the decision problem for three-contract games of perfect information is contained in \NP\, as the witness (a cut for the first contract player) can be verified by Algorithm \ref{algo:inducibleregion}.

\begin{conjecture}\label{conj:npc}
    Computing an SPE for three-contract games is \NP-complete. $\hfill\diamond$
\end{conjecture}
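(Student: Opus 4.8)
The plan is to treat the two directions separately, observing that containment in $\NP$ is already immediate and that the entire content of the conjecture is the matching hardness bound. For membership: given a candidate first cut $\cutof{1}$ for $\playern{1}$, the residual game is a two-contract game of perfect information, so by \cref{thm:two_contracts_feasibility} (equivalently, by running \cref{algo:inducibleregion}) we can verify in time $O(m\ell)$ whether $\playern{1}$ attains the target utility; the cut is a polynomial-size witness. Thus the real work is the $\NP$-hardness lower bound, which I would prove by reduction from $\textsf{CircuitSAT}$, recycling the $\NAND$-evaluation gadget built for \cref{thm:one_contract_np_complete}.

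The overall architecture would mirror the imperfect-information reduction: $\playern{1}$ holds the first contract and encodes a truth assignment by cutting the bottom level of the tree, and the $\NAND$-gadgets propagate a utility vector encoding $\top$ to the root exactly when the assignment satisfies the circuit. The essential difficulty is that \cref{thm:one_contract_np_complete} used information sets to force all occurrences of a variable to be cut identically, and in perfect information this device is gone. I would instead enforce consistency using the two remaining contracts $\playern{2}$ and $\playern{3}$ through the threat mechanism of \cref{thm:perfect_information_pspace_hard}: for each variable, attach a punishment subgame reachable by $\playern{2}$ and $\playern{3}$ whose outcome is catastrophic for $\playern{1}$ but is itself induced only when $\playern{1}$ assigns that variable inconsistently, with utilities calibrated so that $\playern{2}$ and $\playern{3}$ rationally carry out the threat precisely against inconsistent cuts. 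A rational $\playern{1}$ therefore commits only to globally consistent assignments.

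Concretely, the steps in order are: (i) build the variable-assignment layer as a single $\playern{1}$-cut; (ii) wire in the $\NAND$-gadgets of \cref{thm:one_contract_np_complete}, reindexing the auxiliary logic players so they do not clash with the contract players; (iii) design a two-layer $(\playern{2},\playern{3})$ consistency gadget that, using $\text{threaten}$ in the sense of \cref{algo:inducibleregion}, forces each variable to take a single value; and (iv) calibrate all utility vectors — in the style of the inequality system preceding \cref{thm:one_contract_np_complete} — so that $\playern{1}$'s SPE payoff exceeds the target if and only if the circuit is satisfiable, and then invoke the membership argument above to obtain completeness.

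The main obstacle is step (iii): compressing consistency enforcement into exactly two helper contracts. The $\PSPACE$-hardness reduction of \cref{thm:perfect_information_pspace_hard} spent $\Theta(|E|)$ contract players precisely so that each constraint could be policed by its own independent threat, whereas here only $\playern{2}$ and $\playern{3}$ are available to discipline all variable occurrences at once. A single pair of contracts can threaten along only a bounded number of independent branches per subgame, so it is genuinely unclear whether two contracts can enforce consistency across arbitrarily many occurrences without the threats interfering with one another or with the circuit evaluation. Closing this gap — either via a cleverer shared-threat gadget, or by reducing from a structurally restricted satisfiability problem (for instance a read-twice or planar variant) whose consistency requirements fit the two-contract budget — is exactly what separates the present statement from a theorem, and is why it remains a conjecture.
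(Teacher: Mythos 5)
The statement you were given is a \emph{conjecture}: the paper contains no proof of it, only the one-line membership observation in the sentence immediately preceding \cref{conj:npc} (a cut for the first contract player serves as the witness and is verified via \cref{algo:inducibleregion}). Your membership argument reproduces that observation essentially verbatim, and it is the same approach: guess $P_1$'s contract, then the residual game is a two-contract game of perfect information whose SPE is computable in time $O(m\ell)$ by \cref{thm:two_contracts_feasibility}. Your hardness plan, by contrast, goes beyond anything in the paper, and you are right to stop where you stop: step (iii) --- enforcing consistency of all occurrences of each variable using only the two helper contracts $\playern{2},\playern{3}$, with no information sets available --- is exactly the missing ingredient, and it is precisely why the paper states this as a conjecture rather than a theorem. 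The \PSPACE-hardness reduction of \cref{thm:perfect_information_pspace_hard} really does spend six fresh contract players per edge so that each threat is policed independently, and no known gadget compresses this into a constant contract budget, so your proposal does not (and does not claim to) close the gap.

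One caveat worth flagging on the membership half, which both you and the paper gloss over: since $P_1$ holds the \emph{first} contract, the expanded tree below $P_1$'s contract node contains the contract nodes of $P_2$ and $P_3$, so a cut for $P_1$ may a priori prescribe a different restriction for every pair of later contracts and is therefore an object of exponential size as a raw cut. For the witness to be polynomial one needs that an optimal $P_1$-contract admits an equivalent polynomial-size description (e.g., a target path plus a bounded threat structure, as implicitly maintained by the $(\mathbf{u},c_1)$ tuples in \cref{algo:inducibleregion}); this is the same describability issue the paper raises in its remark following \cref{thm:poly_hierarchy-hard}. Your phrase ``the residual game is a two-contract game'' is correct in spirit but silently assumes such a compact witness, so if you ever write this up, the \NP\ membership claim deserves a sentence justifying why a polynomial-size equivalent contract exists for the first player.
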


\section{Conclusion}
In this paper, we proposed a game-theoretic model for games in which players have shared access to a blockchain that allows them to deploy smart contracts. We showed that our model generalizes known notions of equilibria, with a single contract being equivalent to a Stackelberg equilibrium and two contracts equivalent to a reverse Stackelberg equilibrium.
We proved a number of bounds on the complexity of computing an SPE in these games with smart contracts, showing, in general, it is infeasible to compute the optimal contract.

\bibliographystyle{splncs04}
\bibliography{refs}

\end{document}